\documentclass[letter paper, 10 pt, conference]{ieeeconf}

\IEEEoverridecommandlockouts                              

\overrideIEEEmargins                                      

\usepackage{textcomp}

\usepackage{epsfig} 
\usepackage{psfrag}
\usepackage{cite}
\usepackage{amsmath} 
\usepackage{amssymb}  
\usepackage{dsfont}  
\usepackage{todonotes}
\usepackage{graphicx}
\usepackage{subcaption}
\usepackage[noend]{algorithmic}
\usepackage{algorithm, ulem}
\usepackage{hyperref}
\usepackage{cleveref}
\usepackage{enumitem}

\newcommand{\tr}{\mathrm{tr}}
\newcommand{\bbP}{\mathbb{P}}
\newcommand{\E}{\mathbb{E}}
\newcommand{\xc}{x^{\rm c}}
\newcommand{\xs}{x^{\rm s}}
\newcommand{\ec}{e^{\rm c}}

\newcommand{\xhat}{{x}^{\rm a}}
\newcommand{\xihat}{{\xi}^{\rm a}}
\newcommand{\zetahat}{{\zeta}^{\rm a}}
\newcommand{\za}{z^{\rm a}}
\newcommand{\qq}{q}
\DeclareRobustCommand{\bigO}{%
  \text{\usefont{OMS}{cmsy}{m}{n}O}%
}

\usepackage{color}
\definecolor{red}{rgb}{1,0.2,0.2}
\definecolor{green}{rgb}{0.2,1,0.5}
\definecolor{blue}{rgb}{0,0,1}
\definecolor{lightblue}{rgb}{0.3,0.5,1}

\usepackage{amsthm}
\newtheorem{proposition}{Proposition}
\newtheorem{lemma}{Lemma}
\newtheorem{theorem}{Theorem}

\theoremstyle{remark}
\newtheorem{remark}{Remark}

\title{On Model Protection in Federated Learning against Eavesdropping Attacks} 

\author{ Dipankar Maity and Kushal Chakrabarti
\thanks{D. Maity is with the Department of Electrical and Computer Engineering and an affiliated faculty of the North Carolina Battery Complexity, Autonomous Vehicle, and Electrification Research Center (BATT CAVE), University of North Carolina at Charlotte,  NC, 28223, USA (e-mail: {\tt {dmaity@charlotte.edu}}).}
\thanks{K. Chakrabarti is with the Data and Decision Sciences division of Tata Consultancy Services Research, Mumbai, India (e-mail: {\tt {chakrabarti.k1@tcs.com}}).}
}

\begin{document}

\maketitle

\begin{abstract}
In this study, we investigate the protection offered by Federated Learning algorithms against eavesdropping adversaries. In our model, the adversary is capable of intercepting model updates transmitted from clients to the server, enabling
it to create its own estimate of the model. Unlike previous research, which predominantly focuses on safeguarding client data, our work shifts attention to protecting the client model itself. Through a theoretical analysis, we examine how various factors—such as the probability of client selection, the structure
of local objective functions, global aggregation at the server, and the eavesdropper’s capabilities—impact the overall level of protection. We further validate our findings through numerical experiments, assessing the protection by evaluating the model accuracy achieved by the adversary. Finally, we compare our
results with methods based on differential privacy, underscoring their limitations in this specific context.
\end{abstract} 

\section{Introduction} 
\label{sec:intro}

Traditionally, deep learning techniques require centralized data collection and processing that may be infeasible in collaborative scenarios, such as healthcare, credit scoring, vehicle fleet learning, internet-of-things, e-commerce, and natural language processing, due to the high scalability of modern networks, growing sensitive data privacy concerns, and legal regulations such as GDPR~\cite{fu2024differentially, cummings2018role, nguyen2022federated}. In these domains, data is often distributed among multiple parties of interest, with no single trusted authority. Federated Learning (FL) has emerged as a distributed collaborative learning paradigm, which allows coordination among multiple clients to perform training without sharing raw data. Instead, they participate in the learning process by training models locally and sharing only the model parameters with a central server. This server aggregates the updates and redistributes the improved model to all participants~\cite{mcmahan2017communication, bonawitz2019towards}. Based on the distribution/partition of data among the clients, FL can be classified into horizontal (HFL), vertical (VFL), and transfer (TFL) federated learning~\cite{zhang2021survey, fu2024differentially}. In HFL, the clients have the same feature space but different sample spaces. So, the HFL clients can adopt the same global model for training their datasets. In this paper, we focus on HFL.

Despite the intent to protect sensitive information about the learning process through FL, the clients' raw data and the model parameters are not completely secure during the communications rounds in the training phase. 
Adversaries may still infer sensitive data from the shared model parameters through model inversion attacks~\cite{zhu2019deep, zhao2020idlg, geiping2020inverting, wilson2024federated, lyu2022privacy} or inference attacks~\cite{nasr2019comprehensive, mironov2017renyi}. Cryptographic methods such as homomorphic encryption~\cite{xie2024efficiency} or secure multiparty computation~\cite{li2020privacy} significantly enhance privacy in FL. However, these techniques require significant computational and communication overheads, rendering them unsuitable for client devices with limited power (e.g., mobile phones) or limited bandwidth. 
Differential privacy is a state-of-the-art framework to reduce privacy risks in FL. Gradient clipping and noising the local model parameters before sharing them with the server can reduce the efficiency of the model inversion attacks~\cite{zhu2019deep, wei2021gradient}. While differential privacy offers provable guarantees, it also reduces the accuracy of the federated learning models~\cite{wei2020federated, zhu2019deep}. A hardware-based defense framework is Trusted Execution Environments (TEEs), e.g., Intel SGX, which are secure areas within a device that ensure computations are isolated from other processes and protected from unauthorized access, providing a safe environment for privacy-sensitive operations~\cite{zhang2020enabling}. However, TEEs may have limited computational resources and can be vulnerable to certain side-channel attacks, reducing their overall security and scalability~\cite{zhang2022security, nilsson2020survey}.

While there is a plethora of works on protecting clients' raw data in FL~\cite{li2021survey, bhowmick2018protection, wen2023survey, xu2019hybridalpha, zhang2022security}, the open literature has given relatively less emphasis on protecting the clients' models or the global model during the training process. Protecting the model parameters is also important because it represents the solution
to a certain decision-making problem. An adversary having access to the model parameters or its proxy may cause potential privacy threats. The adversary could use it for further model poisoning attacks to mislead the global model~\cite{tolpegin2020data}, for competitive purposes~\cite{tramer2016stealing}, or they may be able to find ways to reverse-engineer privacy-preserving techniques such as differential privacy~\cite{shokri2017membership}. In this paper we investigate this unexplored area in FL---protection of the clients' model during training. 

While the majority of the existing aforementioned works on client's data privacy consider either the server or a client to be honest but curious, i.e., interested in reconstructing the peer clients' training data, less focus has been on a third-party adversary, who simply listens to the   exchanged messages (i.e., an eavesdropping adversary~\cite{wang2019privacy}) between the client and the server.
In this initial work, we consider a setting where the adversary eavesdrops on the uplink messages of a specific client with the goal of learning that client’s local model. 

The existing HFL frameworks have two main types of client-to-server communication models: (i) clients sending their local models to the server, as used in FedAvg~\cite{mcmahan2017communication}, FedProx~\cite{li2020federated}, FedDyn~\cite{acarfederated}, FedDANE~\cite{li2019feddane}, FedDC~\cite{gao2022feddc} algorithms, and (ii) clients sending the differences between their updated local model and the one received from the server, as in MOCHA~\cite{smith2017federated}, FL vis STC~\cite{sattler2019robust}, SCAFFOLD~\cite{karimireddy2020scaffold}, FSVRG~\cite{konevcny2016federated} algorithms.
The primary objective of this work is to analyze and quantify the level of protection offered in both cases. We emphasize that our goal is \textit{not} to develop new “protected” algorithms but rather to characterize the amount of protection existing algorithms provide.
We view the model's \textit{protection} as a system-level property that each dynamical system possesses \cite{maity2024ensuring}, and in this work, we analyze and compute this protection.

While \textit{accuracy} and \textit{convergence rate} have traditionally been the dominant metrics for selecting among FL algorithms, we argue that model's \textit{protection} is another critical factor that warrants consideration, as we will show that some very powerful algorithms can be \textit{0-protected}. The main contribution of our work is developing a framework, offering a principled method to analyze and compute this model protection, thereby enabling a principled protection comparison across different FL algorithms. Our key contributions are summarized as follows.
\begin{itemize}[leftmargin=*]
    \item We show that the class of FL algorithms, notably FedAvg, FedProx, FedDyn, FedDANE, and FedDC, where the clients upload their local models to the server, are \textit{0-protected}. 
    \item We analytically compute the protection offered by a class of FL algorithms in which clients share model increments with the server.
    Our analysis highlights how client selection probability, adversary capability, and algorithm design influence the level of protection.
    Notably, the analysis reveals that these algorithms retain a non-zero level of protection even when the adversary can eavesdrop with probability 1.
    \item We numerically demonstrate that uploading local model updates, rather than the entire local model, provides significantly stronger protection against eavesdropping adversaries in FL. Our experimental results, conducted on the CIFAR-10 benchmark dataset, validate the theoretical claim that this approach enhances the privacy of client model parameters. To contextualize our framework within the existing literature on client data protection, we compare our method with the FedAvg algorithm, which involves clients uploading their full model to the server, supplemented with a differential privacy mechanism (FedAvg+DP). The adversary's (estimated) model evaluation on reconstructed client data, obtained by the deep leakage attack~\cite{zhu2019deep}, shows that uploading model updates offers superior protection compared to FedAvg+DP, strengthening the FL process against privacy leaks.
\end{itemize}


{\bf Notation}: 
Throughout this paper, we use the superscripts $(\cdot)^a$, $(\cdot)^c$, and $(\cdot)^s$ to denote quantities associated to the adversary, client, and the server, respectively. We let $I$ denote the identity matrix of the appropriate dimension. For a vector or matrix, $(\cdot)^\intercal$ denotes its transpose. Let $N$ be any natural number.
We use $\nabla g(x)$ and $\nabla^2 g(x)$ to denote the gradient and Hessian of a function $g: \mathbb{R}^N \to \mathbb{R}$ evaluated at $x \in \mathbb{R}^N$. We let $\|\cdot\|$ denote the Euclidean norm of a vector and $|\cdot|$ denote the absolute value of a scalar. 
We let $\tr(\cdot)$ and $(\cdot)^{-1}$ denote the trace and inverse of a matrix, respectively. 
We let $\lambda_{\max}(\cdot)$ denote the maximum magnitude of eigenvalues of a matrix. For two symmetric matrices $A,B$ of same dimensions, we let $A \succeq B$ denote that $(A-B)$ is symmetric positive semi-definite. Finally, $\max\{a,b\}$ denote the maximum of two real-valued scalars $a$ and $b$.


\section{Problem Formulation} 

We consider an FL problem in which the uplink from a client to the server is compromised.
An adversary eavesdrops on this communication and attempts to estimate the client's model parameters.  
Let $\xc_t$ and $\xhat_t$ denote the client's model and the adversary's estimate, respectively, at global time $t$. 
In this work, we adopt the squared error norm as a metric of protection  against eavesdropping adversaries; see \cite{agarwal2020distortion, arXivMaity2021} for details. 
Specifically, the \textit{protection} at time $t$ is quantified by $\E[\|\xc_t - \xhat_t\|^2]$ and the final (asymptotic) protection by $\lim_{t\to \infty} \E[\|\xc_t - \xhat_t\|^2]$. 
The expectation is taken with respect to the randomness in both the eavesdropping process and the client selection of the FL algorithm.

\subsection{Federated Learning Architecture}
\label{sub:fl_arch}

In a typical FL setup, the server (randomly) samples a subset of $n$ clients at each global round $t$ and broadcast the global model $\xs_t$ to them.
Each client independently performs stochastic gradient-descent (SGD) on the received server model and returns the updated local model $\xc_{t+1}$ (or the model difference $\xc_{t+1}-\xs_t$) to the server.
Let the random variable $\delta_t \in \{0,1\}$ denote whether the compromised client is sampled at round $t$ (i.e., $\delta_t =1$) or not (i.e., $\delta_t =0$). 
Assuming clients are sampled uniform randomly, we have 
\begin{align*}
    \bbP(\delta_t = 1) = \frac{n}{N} := p,
\end{align*}
where $N$ is the total number of clients. We assume that client sampling is independent at each round, i.e., $\delta_t$ and $\delta_{t'}$ are independent random variables for all $t\ne t'$.

Let us define 

\begin{align*}
    \xi_{t} = \xc_{t+1} - \xs_{t},
\end{align*}
to be the model update by the client resulting from its $K$ steps of gradient-descent (GD).\footnote{For the simplicity of the exposition, we are considering gradient-descent instead of stochastic gradient-descent. However, the analysis can be carried out in a very similar fashion for stochastic gradient descents.} 
That is,
\begin{align} \label{eq:xi_t}
    \xi_t &= - \eta\sum_{k=0}^K \nabla f^c (x_k),\\
    x_{k+1} &= x_k - \eta \nabla f^c (x_k), ~ \forall k \in [0,K-1], \quad x_0 = \xs_t. \nonumber
\end{align}
Here, $\eta$ and $f^c$ denote the client's learning rate and local objective function, respectively. In controls, signals akin to $\xi_t$ are sometimes referred to as the \textit{innovation signals} or \textit{new information}, especially when they have randomness in them and when they are used in estimating certain quantities.

The dynamics of the client's model can be written as 
\begin{align*}
    \xc_{t+1} = \begin{cases}
        \xs_t + \xi_t \qquad &\delta_t = 1\\
        \xc_t, &\delta_t = 0. 
    \end{cases}
\end{align*}
In a more compact form, we may write, 
\begin{align} 
    \xc_{t+1} & = \xc_t + \delta_t (\xi_t + \zeta_t),  \label{eq:client_dyn} \\
    \zeta_t & = \xs_t - \xc_t,  \label{eq:zeta_t}
\end{align}
where $\zeta_t$ is the model mismatch between the client and server at time $t$. 

For ease of reference, we refer to the above FL algorithm as Federated Learning with Innovation Process (FLIP), when the client returns $x^c_{t+1}-x^s_t$ to the server, and as Federated Learning withOut innovation Process (FLOP), when the client returns $x^c_{t+1}$. Since we aim to differentiate the \textit{protection} of FLIP from FLOP, we considered only GD updates for local optimization in~\eqref{eq:xi_t} for simplicity of presentation. As the protection analysis, presented later in Section~\ref{sec:prot_analysis}, is agnostic to the specific form of $\xi_t$ in~\eqref{eq:xi_t},
our framework can be extended to other FL algorithms mentioned earlier in Section~\ref{sec:intro}. 


\subsection{Adversary Eavesdropping Model \& Estimation Dynamics}
\label{sub:adv_model}

The eavesdropping mechanism is modeled by a Bernoulli process $\mu_t \sim \text{Bernoulli}(\gamma)$ \cite{tsiamis2019state}. 
A successful eavesdropping of the client-to-server communication at global round $t$ is denoted by the event $\mu_t = 1$.
Likewise, $\mu_t = 0$ denotes an unsuccessful eavesdropping outcome. 
We assume that the adversary knows whether the client is sampled at the current time or not. 
That is, it knows the outcome of $\delta_t$. We also assume that the eavesdropping process is temporally uncorrelated—i.e., $\mu_t$ and $\mu_{t'}$ are independent for all $t \ne t'$—and independent of the client selection process, i.e., $\mu_t$ and $\delta_{t'}$ are independent random variables for all $t$ and $t'$.

We let the adversary follow the estimation dynamics 
\begin{subequations} \label{eq:adv_dynamics}
\begin{align}
    \xhat_{t+1} & = \xhat_t + \delta_t \za_{t}, \label{eqn:xhat_def} \\
   \za_{t} & = (\mu_{t} \xi_{t} + (1- \mu_{t})  \xihat_t) + \zetahat_t, \label{eqn:xihat_def}
\end{align}
\end{subequations}
which mimics the client dynamics \eqref{eq:client_dyn} and computes an estimate of $\xi_t + \zeta_t$ via $z_t$. 
Notice that, when eavesdropping is successful, the adversary only intercepts the uplink message $\xi_t$ and not the model difference $\zeta_t$.
Hence, it computes an estimate for $\zeta_t$ as $\zetahat_t$ in \eqref{eqn:xihat_def}.
On the other hand, when eavesdropping fails, it estimates both $\xi_t$ and $\zeta_t$ in \eqref{eqn:xihat_def}.
Finally, $\delta_t = 0$ yields $\xhat_{t+1} = \xhat_t $, making it consistent with the client dynamics $\xc_{t+1} = \xc_t $ for $\delta_t=0$.

\begin{remark} \label{rem:zeta}
    Notice that, if $t'$ is the most recent round before $t$ when the client was selected, then $\zeta_t = \xs_t - \xc_t = \xs_t - \xs_{t'} - \xi_{t'}$.
   Therefore, computing $\zeta_t$ requires knowledge of the server state change, $\xs_t - \xs_{t'}$, over the interval $[t', t]$ even when the adversary is able to intercept $\xi_{t'}$. This quantity is a random variable influenced by several factors, including the client selection mechanism, the local objective functions, and the data distributions at other clients. These dependencies make accurately estimating $\zeta_t$ particularly challenging.
    In this work, we do not focus on any specific estimation technique for $\zetahat_t$, and leave it for a future work.
    The final expression of protection will show how this estimate, $\zetahat_t$, affects protection. \hfill $\blacksquare$
\end{remark}


To develop an estimate for $\xi_t$, we proceed as follows. Recall that, $k$ denote the local steps index at the client for a given $t$.
Let us define $y_k := \nabla f^c (x_k)$ and obtain
\begin{align*}
    y_{k+1} & = \nabla f^c (x_{k+1}) = \nabla f^c (x_k - \eta y_k) \\
    & \approx \nabla f^c (x_k) - [\nabla^2 f^c (x_k)] \eta y_k \\
    & = \underbrace{\!\!(I - \eta \nabla^2 f^c (x_k))\!\!}_{:=F_k} ~~y_k \\
    & = \big( \prod\nolimits_{\ell = 0}^k F_\ell) y_0 =  \big( \prod\nolimits_{\ell = 0}^k F_\ell) \nabla f^c (\xs_t).
\end{align*}
Consequently, we have 
\begin{align} \label{eq:xi_intermediate}
    \xi_t = -\eta \sum\nolimits_{k=0}^K y_k = \underbrace{-\eta \sum\nolimits_{k=0}^K\! \big( \prod\nolimits_{\ell = 0}^{k-1} F_\ell)\!\!}_{:=G_t} \nabla f^c (\xs_t). 
\end{align}
Now, recall that $\xs_t =  \xc_t + \zeta_t$ due to \eqref{eq:zeta_t}.
Furthermore, if  $t'$ is the most recent round before $t$ when the client was selected, then $\xc_t = \xc_{t'+1} = \xs_{t'} + \xi_{t'}$. 
Thus, from \eqref{eq:xi_intermediate}
\begin{align*}
    \xi_t & = G_t  \nabla f^c (\xs_t) = G_t \nabla f^c (\xs_{t'} + \xi_{t'} + \zeta_t) \\
    & \approx G_t \big( \nabla f^c (\xs_{t'}) + (\nabla^2f^c(\xs_{t'})) (\xi_{t'} + \zeta_t) \big) \\
    & = G_t (G_{t'}^{-1} + \nabla^2f^c(\xs_{t'})) \xi_{t'} + G_t \nabla^2f^c(\xs_{t'}) \zeta_t,
\end{align*}
where the last equality follows from the fact that $\xi_{t'} = G_{t'} \nabla \!f^c (\xs_{t'})$; see \eqref{eq:xi_intermediate}.\footnote{
Invertibility of $G_{t'}$ is ensured by picking a learning rate $\eta < \frac{1}{\sup_{x} \lambda_{\max}(\nabla^2f^c(x))}$. This ensures $F_\ell \succ 0$ for all $\ell$ in \eqref{eq:xi_intermediate}. 
} 
In summary, the analysis shows that 
\begin{align} \label{eq:xi_dyn}
    \xi_t \approx A_{t,t'} \xi_{t'} + B_{t,t'} \zeta_t,
\end{align}
where $A_{t,t'} = G_t (G_{t'}^{-1} + \nabla^2f^c(\xs_{t'}))$ and $B_{t,t'} = G_t \nabla^2f^c(\xs_{t'}) $ depend on the server states $\xs_t$, $\xs_{t'}$ and the gradients and Hessians of the client objective function $f^c$ evaluated at those server states. 
In the event of a failed interception, the adversary uses \eqref{eq:xi_dyn} to estimate $\xi_t$ from its knowledge/estimate of $\xi_{t'}$. 

As discussed earlier in \Cref{rem:zeta}, estimating $\zeta_t$ is highly challenging task. 
Similarly, estimation of $A_{t.t'}$ requires knowledge of the server states and the gradients and Hessians of $f^c$. 
To mimic \eqref{eq:xi_dyn} in the absence of such knowledge the adversary may follow 
\begin{align} \label{eq:xihat_dynamics}
    &\xihat_t = \begin{cases}
        M \xihat_{t'},  \qquad &\mu_t = 0,\\
        \xi_t, &  \mu_t = 1,
    \end{cases}~ 
\end{align}
for a suitably chosen matrix $M$ that acts a proxy to $A_{t,t'}$.


Note that, the dynamics \eqref{eq:xihat_dynamics} is missing an initial condition. 
So far, $\xihat_t$ has been defined in terms of $\xihat_{t'}$, where $t'$ is the most recent time the client was sampled before time $t$. 
If $t_0$ is the first time the client was sample, there would be no $t_0'$. 
Therefore, to compute $\xihat_{t_0}$, we use the initialization $\xihat_{t_0} = \mu_{t_0}\xi_{t_0}$. 
That is, if the adversary can successfully eavesdrop, it uses that eavesdropped signal, which results in a zero initialization error. 
Otherwise, the adversary initializes it to be $0$. 


\section{Protection Analysis} 
\label{sec:prot_analysis}

To analyze protection, we define the error $\ec_t = \xc_t - \xhat_t$ and obtain its dynamics: 
\begin{align} \label{eq:ec_dynamics}
    \ec_{t+1} &= \xc_{t+1} - \xhat_{t+1}  \overset{ \eqref{eq:client_dyn},\eqref{eqn:xhat_def} }{=}  \ec_t + \delta_t (\xi_t + \zeta_t - \za_t) \\
    &\overset{\eqref{eqn:xihat_def}}{=} \ec_t + \delta_t ( \zeta_t - \zetahat_t + (1-\mu_t) (\xi_t -\xihat_t)  ).
\end{align}

Before proceeding further, we notice that the dynamics of $\xihat_t$ in \eqref{eq:xihat_dynamics} depends on the most recent client selection time $t'$, which is a random variable. 
In other words, the difference $t-t'$ is not only time-varying, but also stochastic. 
To deal with such intricacies, let us formally define the random variable $\tau_t$ to be the most recent client sampling time up to but not including time $t$. 
Therefore, $\tau_t \le (t-1)$ almost surely. 
Furthermore, we use the notation $\tau_t = -1$ to denote that the client has not been sampled up to time $(t-1)$. 
More compactly, we may write 
\begin{align} \label{eq:tau_dynamics}
    \tau_t = \delta_{t-1} (t-1) + (1-\delta_{t-1}) \tau_{t-1}, \qquad \tau_{-1} = -1,
\end{align}
which ensures $\tau_t = t-1$ iff the client was selected at time $t-1$, and $\tau_t = -1$ iff the client has not been sampled at all up to time $t-1$. 


To analyze the protection, we first leverage the newly defined variable $\tau_t$ to rewrite the $\xihat_t$ dynamics \eqref{eq:xihat_dynamics}  as follows:
\begin{align} \label{eq:xihat_compact_dynamics}
    \xihat_t = \mu_t \xi_t + (1-\mu_t)M\xihat_{\tau_t}, \qquad \xihat_{-1} = 0.
\end{align}
Finally, we define an additional variable $\qq_t = \xi_{\tau_t} - \xihat_{\tau_t}$ that follows the dynamics 
\begin{align*}
    \qq_{t+1} & = \xi_{\tau_{t+1}} - \xihat_{\tau_{t+1}} \\
     & \overset{\eqref{eq:tau_dynamics}}{=} \delta_t (\xi_{t} - \xihat_{t}) + (1-\delta_t) (\xi_{\tau_{t}}- \xihat_{\tau_{t}}) \\
     & \overset{\eqref{eq:xihat_compact_dynamics}}{=} \big(\! (1-\delta_t)I\! + \delta_t(1-\mu_t)M \big)q_t + \delta_t(1-\mu_t)(\xi_t - M \xi_{\tau_t}).
\end{align*}

Now, we turn our attention back to the dynamics of $\ec_t$ in \eqref{eq:ec_dynamics} and use \eqref{eq:xihat_compact_dynamics} to rewrite it as 
\begin{align*}
    \ec_{t+1} = \ec_t + \delta_t(\zeta_t -\zetahat_t) + \delta_t (1-\mu_t)(Mq_t + \xi_t - M \xi_{\tau_t})
\end{align*}
We define the joint state as $\sigma_t = [{\ec_t}^\intercal, q_t^\intercal]^\intercal$, which follows the dynamics 
\begin{align*}
    &\sigma_{t+1} = A_t \sigma_t + u_t, \\
    \text{where} \qquad \quad &A_t =  \begin{bmatrix}
        I & \delta_t(1-\mu_t)M\\
        0 & (1-\delta_t)I + \delta_t(1-\mu_t)M 
    \end{bmatrix}, \quad\\
    &u_t = \begin{bmatrix}
       \delta_t (\zeta_t - \zetahat_t) + \delta_t(1-\mu_t)(\xi_t - M\xi_{\tau_t}) \\
       \delta_t(1-\mu_t)(\xi_t - M\xi_{\tau_t})
    \end{bmatrix}.
\end{align*}

By defining the quantity $\Sigma_t = \E[\sigma_t \sigma_t^\intercal]$, we notice that the protection at time $t$ can be found from $\tr([I ~~0] \Sigma_t [I ~~0]^\intercal)$ and likewise, the final protection is $\tr([I ~~0] \lim_{t\to \infty} \Sigma_t [I ~~0]^\intercal)$. Here the expectation is taken with respect to randomness of $\{\delta_t\}_{t\ge 0}$ and $\{\mu_t\}_{t\ge 0}$.
Therefore, to analyze the protection, we investigate the time-evolution of $\Sigma_t$:
\begin{align*} 
\begin{split}
    \Sigma_{t+1} = \E[A_t \Sigma_t& A_t^\intercal] + \E[u_t u_t^\intercal] 
    + \E[A_t \sigma_t u_t^\intercal] + \E[u_t \sigma_t^\intercal A_t^\intercal],
\end{split}
\end{align*} 
where we have used $\E[A_t \sigma_t \sigma_t^\intercal A_t^\intercal ] = \E[\E[A_t \sigma_t \sigma_t^\intercal A_t^\intercal |  $ $\mu_t,  \delta_t]] = \E[A_t \E[\sigma_t \sigma_t^\intercal | \mu_t, \delta_t] A_t^\intercal] = \E[A_t \Sigma_t A_t^\intercal]$. 

We further notice that the adversary must ensure that the $\Sigma_t$ dynamics is stable, otherwise, the client may achieve an infinite amount of protection.
We will next show that the stability of the $\Sigma_t$ dynamics is solely controlled by the adversary by picking an appropriate $M$. 
The following lemma provides the necessary  condition on $M$ to ensure stability of $\Sigma_t$.

\begin{lemma} \label{lem:M_condition}
    A necessary condition for the stability of $\Sigma_t$ is that all the eigenvalues of $M$ must have a magnitude less than $(p(1-\gamma)\max\{\gamma, 1-\gamma\})^{-\frac{1}{2}}$. 
\end{lemma}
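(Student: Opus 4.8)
The plan is to read \emph{stability of $\Sigma_t$} as stability of the linear matrix recursion it satisfies and to reduce the question to the spectral radius of the homogeneous covariance operator $\mathcal{L}(X) = \E[A_t X A_t^\intercal]$ acting on symmetric matrices. The forcing terms $\E[u_t u_t^\intercal]$ and $\E[A_t \sigma_t u_t^\intercal]$ are driven by the exogenous innovation signals $\xi_t$, which the adversary does not control and which are bounded whenever the underlying FL process behaves well; hence the only way $\Sigma_t$ can blow up is through an unstable homogeneous mode, and a necessary condition for stability is that the relevant part of $\mathcal{L}$ has spectral radius at most one. First I would make precise that $(\delta_t,\mu_t)$ is independent of $\sigma_t$ (both are built from strictly past randomness), which is exactly what legitimizes the conditional-expectation step $\E[A_t \sigma_t \sigma_t^\intercal A_t^\intercal] = \E[A_t \Sigma_t A_t^\intercal]$ already used in the excerpt, so that $\mathcal{L}$ is a genuine positive (PSD-preserving) linear operator and its spectral radius governs the recursion.

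The key structural observation is that $A_t$ is block upper-triangular, with identity in the $(1,1)$ block and $C_t := (1-\delta_t)I + \delta_t(1-\mu_t)M$ in the $(2,2)$ block. Consequently $\mathcal{L}$ is block-triangular in the coordinates $(\Sigma_t^{(22)}, \Sigma_t^{(12)}, \Sigma_t^{(21)}, \Sigma_t^{(11)})$: the innovation-error covariance $\Sigma_t^{(22)} = \E[q_t q_t^\intercal]$ evolves \emph{autonomously} under the lifted operator $\mathcal{L}_{22}(X) = \E[C_t X C_t^\intercal]$, while the cross blocks and the protection block $\Sigma_t^{(11)}$ are slaved to it. The spectrum of $\mathcal{L}$ is therefore the union of the spectra of these diagonal sub-operators together with a marginal eigenvalue $1$ from the integrator in the $(1,1)$ block; the latter is harmless, since it merely accumulates summable forcing once the sub-operators are strictly stable, so convergence of the protection $\tr(\Sigma_t^{(11)})$ is controlled by the sub-operators. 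I would then enumerate the three realizations $C_t \in \{I,\,M,\,0\}$ with probabilities $1-p$, $p(1-\gamma)$, $p\gamma$, giving (after vectorization) $\mathcal{L}_{22} = (1-p)\,\mathrm{Id} + p(1-\gamma)\,(M\otimes M)$, whose eigenvalues are $(1-p) + p(1-\gamma)\lambda_i\lambda_j$ for eigenvalues $\lambda_i,\lambda_j$ of $M$. Requiring these to remain inside the unit disc is what forces a magnitude bound on $\lambda_{\max}(M)$.

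The main obstacle is extracting the \emph{clean scalar} threshold $(p(1-\gamma)\max\{\gamma,1-\gamma\})^{-1/2}$ from this operator condition, and two points need care. First, $M$ need not be symmetric, so I must reason about the eigenvalues of the lifted map $X \mapsto MXM^\intercal$ (the products $\lambda_i\lambda_j$, whose worst case is attained by a conjugate pair or a repeated real eigenvalue of largest magnitude) rather than about singular values; a Perron-type argument for the positive operator $\mathcal{L}_{22}$ should pin down that its dominant eigenvalue is governed by $|\lambda_{\max}(M)|^2$. Second — and this is where the factor $\max\{\gamma,1-\gamma\}$ should enter — I anticipate the stated bound is obtained not from $\mathcal{L}_{22}$ in isolation but by lower-bounding the per-step growth of $\E[\|q_t\|^2]$ that survives into the $\ec_t$-integrator, worst-casing over the two eavesdropping outcomes: the branch $\mu_t=0$ (probability $1-\gamma$) propagates the error by $M$ under the selection weight $p$, while the reset branch $\mu_t=1$ contributes the complementary weight, and combining them yields a surviving growth factor bounded below by $p(1-\gamma)\max\{\gamma,1-\gamma\}\,|\lambda_{\max}(M)|^2$; setting this strictly below one and solving gives the claimed threshold. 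I expect verifying that this lower bound is legitimate — that no cancellation removes the dominant mode and that the worst-case outcome is correctly identified — to be the delicate quantitative step, whereas the operator reduction and block-triangular decomposition of the first two paragraphs are routine.
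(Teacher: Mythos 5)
Your operator reduction and block-triangular decomposition (paragraphs one and two) are correct and take a genuinely different route from the paper. The paper never exploits the triangular structure of $A_t$: instead (Proposition~\ref{prop:L}) it regroups $\E[A_t\Sigma A_t^\intercal]$ as $(1-p)\Sigma + pK_1\Sigma K_1^\intercal + p\gamma(1-\gamma)K_2\Sigma K_2^\intercal$ with $K_1 = \left[\begin{smallmatrix} I & (1-\gamma)M\\ 0 & (1-\gamma)M\end{smallmatrix}\right]$ and $K_2 = \left[\begin{smallmatrix} 0 & M\\ 0 & M\end{smallmatrix}\right]$, and then uses cone-monotonicity of positive operators to lower-bound the spectral radius of $\mathcal{L}$ by $\max\{p\lambda_{\max}^2(K_1),\, p\gamma(1-\gamma)\lambda_{\max}^2(K_2)\}$; evaluating $\lambda_{\max}(K_1)=\max\{1,(1-\gamma)\lambda_{\max}(M)\}$ and $\lambda_{\max}(K_2)=\lambda_{\max}(M)$ gives the stated threshold. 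The factor $\max\{\gamma,1-\gamma\}$ is thus an artifact of comparing the two summands in this deliberately loose lower bound; it is \emph{not} a sharp constant that your exact spectral analysis must reproduce. Your explicit handling of the marginal eigenvalue $1$ from the identity $(1,1)$ block is, incidentally, more careful than the paper, which asserts $\rho(\mathcal{L})<1$ to be ``necessary and sufficient'' even though the integrator forces $\rho(\mathcal{L})\ge 1$ identically (take $\Sigma = \left[\begin{smallmatrix} X & 0\\ 0 & 0\end{smallmatrix}\right]$, for which $\mathcal{L}(\Sigma)=\Sigma$).

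The genuine gap is your third paragraph: you treat recovering the constant $p(1-\gamma)\max\{\gamma,1-\gamma\}$ as the unresolved crux and propose an unverified ``surviving growth factor'' heuristic through the $\ec_t$-integrator, which corresponds to no correct derivation (and not to the paper's). But the gap closes in one line from your own paragraph two. Your exact computation gives $\mathcal{L}_{22}$ eigenvalues $(1-p)+p(1-\gamma)\lambda_i\lambda_j$; since $M$ is real, pairing the dominant eigenvalue with its conjugate yields the real eigenvalue $(1-p)+p(1-\gamma)|\lambda_{\max}(M)|^2$ of the positive operator $\mathcal{L}_{22}$, whose Perron eigenvector lies in the PSD cone and is excited by the PSD forcing, so stability forces
\begin{align*}
(1-p)+p(1-\gamma)\,\lambda_{\max}^2(M) < 1 \quad \Longleftrightarrow \quad \lambda_{\max}(M) < (1-\gamma)^{-\frac{1}{2}}.
\end{align*}
Because $p\max\{\gamma,1-\gamma\}\le 1$, we have $(1-\gamma)^{-\frac{1}{2}} \le (p(1-\gamma)\max\{\gamma,1-\gamma\})^{-\frac{1}{2}}$, so your condition is a \emph{stronger} necessary condition that immediately implies the lemma. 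You overlooked that a claimed necessary condition need only be implied by whatever stability actually forces, not matched exactly; with that observation, your route proves the lemma (indeed a sharper version of it), while the paper's positivity bound buys a shorter argument at the cost of a weaker threshold.
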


\begin{proof}
    The stability of the $\Sigma_t$ dynamics is solely determined by the term $\E[A_t \Sigma_t A_t^\intercal]$. One may verify (see \Cref{AP:someUsefulResults}) that $\E[A_t \Sigma_t A_t^\intercal]  = \mathcal{L} (\Sigma_t)$, with the linear operator $\mathcal{L}$ being defined as 
    \begin{align} \label{eq:operator:L}
       \mathcal{L}\! (\Sigma_t) =\!   (1\!-\!p) \Sigma_t + p K_1\!\Sigma_t K_1^\intercal +p\gamma(1\!-\!\gamma) K_2\! \Sigma_t K_2^\intercal, 
    \end{align}
    \begin{align*} 
     \text{where, }\qquad K_1 = \begin{bmatrix}
            I & (1-\gamma)M\\
            0 & (1-\gamma)M
        \end{bmatrix},\quad
    K_2 =  \begin{bmatrix}
            0 & M\\
            0 & M
        \end{bmatrix}. \hspace{1cm}
    \end{align*}
    The necessary and sufficient condition for stability is to ensure that the spectral radius of $\mathcal{L}$ is less than 1. 
    Since the spectral radius of $\mathcal{L}$ is  greater than or equal to $\max\{p\lambda_{\max}^2(K_1),~~ p\gamma(1-\gamma)\lambda_{\max}^2(K_2)\}$, 
    the necessary condition becomes
    \begin{align} \label{eq:Pre_necessary_condition}
        1 > \max\{p\lambda_{\max}^2(K_1),~~ p\gamma(1-\gamma)\lambda_{\max}^2(K_2)\}.
    \end{align}
    Further investigation on $K_1$ and $K_2$ (see \eqref{eq:K1_K2} in \Cref{AP:someUsefulResults})  reveals $\lambda_{\max}(K_1) = \max\{1, (1-\gamma)\lambda_{\max}(M)\}$ and $\lambda_{\max}(K_2) = \lambda_{\max}(M)$.
    Therefore, the necessary condition \eqref{eq:Pre_necessary_condition} simplifies to 
    \begin{align*}
        1 > p (1-\gamma) \max\{\gamma, 1-\gamma\} \lambda_{\max}^2(M).
    \end{align*}
    This proves the necessary condition.
\end{proof}

\begin{remark}
As $p$ approaches zero and/or $\gamma$ approaches 1, the allowable upper bound on $\lambda_{\max}(M)$ increases, thereby enlarging the set of feasible matrices that satisfy the condition in \Cref{lem:M_condition}.
When $\gamma$ approaches 1, the need to estimate $\xi$ diminishes, as the adversary is able to intercept messages more frequently.
Consequently, the estimation dynamics of $\xihat$—and thus the role of $M$—become less significant. Likewise, when $p\to 0$, $M$ barely affects the error dynamics since $\bbP(\ec_{t+1} = \ec_t) \ge 1-p$.\footnote{
From \eqref{eq:ec_dynamics} we observe that the event $\{\delta_t=0\}$ implies $\ec_{t+1} =  \ec_t$. Thus, $\bbP(\ec_{t+1} =  \ec_t) \ge  \bbP(\delta_t = 0) = 1-p.$
} \hfill $\blacksquare$
\end{remark}

Without loss of any generality, for the subsequent analysis we assume that the adversary picks an $M$ satisfying the condition in \Cref{lem:M_condition}. 
To facilitate the computation of the protection, let us denote the quantities $\bar\sigma_t := \E[\sigma_t]$, $\tilde{\sigma}_t := \sigma_t - \bar\sigma_t$, $\bar A := \E[A_t]$, $\tilde A_t := A_t - \bar A$, $\bar u_t := \E[u_t]$, and $\tilde u_t := u_t - \bar u_t$. 
Then, 
\begin{align}
    &\bar\sigma_{t+1} = \bar A \bar \sigma_t + \bar u_t, \label{eq:sigma_bar}\\
    & \tilde \sigma_{t+1} = A_t \tilde \sigma_t + \underbrace{\tilde A_t \bar \sigma_t + \tilde u_t}_{:= v_t}. \label{eq:sigma_tilde}
\end{align}
Using the definitions of $\bar\sigma_t$ and $\tilde\sigma_t$, we may write $\Sigma_t = \E[\tilde\sigma_t \tilde \sigma_t^\intercal] + \bar\sigma_t \bar\sigma_t^\intercal$. 
By defining $\tilde\Sigma_t = \E[\tilde\sigma_t \tilde \sigma_t^\intercal]$, we obtain its dynamics
\begin{align} \label{eq:Sigma_tilde_Dynamics}
    \tilde\Sigma_{t+1} = \E[A_t \tilde\Sigma_t A_t^\intercal] + \E[v_t v_t^\intercal], 
\end{align}
where we have used $\E[A_t \tilde\sigma_t v_t^\intercal] = 0$.\footnote{The proof is left as an exercise to the reader.}  
At this point, we are ready to state the main result on the amount of protection.

\begin{theorem} \label{thm:main}
    The final protection is lower bounded as follows
    \begin{align*}
        \liminf_{t\to\infty} \|\ec_t\|^2 \ge \liminf_{t\to \infty} \{ (1-p)\|s_t + r_t\|^2 + p\gamma\|s_t\|^2\\ + (1-p)\frac{\gamma}{1-\gamma} \|r_t\|^2 +  \E[\| \xi_{\tau_t}-\E[\xi_{\tau_t}]\|^2]
        +\|g_t\|^2\},
    \end{align*}
    where
    \begin{align*}
        &r_t = \zeta_t-\zetahat_t, \\
        &s_t = h_t + M\E[q_t],\\
        &h_t = (1-p)^t\xi_t + \sum\nolimits_{k=0}^{t-1}(1-p)^{t-k-1}p(\xi_t - M\xi_k).  \end{align*}  \begin{align*}
        & g_t = \xc_o - \xhat_0 + p\big(\sum\nolimits_{k=0}^{t} r_k\big)\\
        &\qquad  + p(1-\gamma)(I-(1-\gamma)M)^{-1}\sum\nolimits_{k=0}^{t}h_k.
    \end{align*}
\end{theorem}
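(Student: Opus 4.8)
The plan is to evaluate the protection $\E[\|\ec_t\|^2] = \tr([I~0]\Sigma_t[I~0]^\intercal)$ through the mean--covariance split $\Sigma_t = \bar\sigma_t\bar\sigma_t^\intercal + \tilde\Sigma_t$ recorded just above the statement. Writing $\bar\sigma_t = [\bar e_t^\intercal,\bar q_t^\intercal]^\intercal$ with $\bar e_t = \E[\ec_t]$ and $\bar q_t = \E[q_t]$, this yields $\E[\|\ec_t\|^2] = \|\bar e_t\|^2 + \tr([I~0]\tilde\Sigma_t[I~0]^\intercal)$. I would then show that the first (mean-squared) term produces $\|g_t\|^2$ and that the covariance block is bounded below by the four remaining terms; since all summands are nonnegative, adding the two bounds and passing to $\liminf$ gives the claim.

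For the mean-squared term I would solve the deterministic recursion \eqref{eq:sigma_bar}. Because $\bar A$ is block upper-triangular, I would first integrate the $\bar q$-recursion $\bar q_{t+1} = \Phi\bar q_t + p(1-\gamma)h_t$, with $\Phi := (1-p)I + p(1-\gamma)M$, and then the $\bar e$-recursion. The enabling identity is $h_t = \xi_t - M\,\E[\xi_{\tau_t}]$, which follows from the geometric law $\bbP(\tau_t = k) = p(1-p)^{t-1-k}$ of the most-recent-selection time (with the convention $\xi_{-1}=0$ on the never-sampled event); substituting it collapses the $\bar e$-recursion to $\bar e_{t+1} = \bar e_t + p r_t + p(1-\gamma)s_t$ with $s_t = h_t + M\bar q_t$. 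Telescoping this and using $I-\Phi = p\,(I-(1-\gamma)M)$ to sum the $\bar q$-sequence is precisely what produces the factor $(I-(1-\gamma)M)^{-1}$ and identifies $\bar e_t$ with $g_t$ up to a term proportional to $\bar q_t$; the latter stays bounded because $M$ obeys \Cref{lem:M_condition}, so it is absorbed by the $\liminf$.

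For the covariance block I would iterate \eqref{eq:Sigma_tilde_Dynamics}, which by \eqref{eq:operator:L} reads $\tilde\Sigma_{t+1} = \mathcal L(\tilde\Sigma_t) + \E[v_tv_t^\intercal]$, after first verifying the identity $\E[A_t\tilde\sigma_t v_t^\intercal]=0$ (left as an exercise in the text) by conditioning on $\mathcal F_t := \sigma(\delta_{0:t-1},\mu_{0:t-1})$ and using that $(\delta_t,\mu_t)$ are independent of $\tilde\sigma_t$ and enter $\tilde A_t,\tilde u_t$ with zero mean. I would then expand the injected covariance $\E[v_tv_t^\intercal]$ from $v_t = \tilde A_t\bar\sigma_t + \tilde u_t$; its $\ec$-block decomposes, over the three elementary events $\{\delta_t=0\}$, $\{\delta_t=1,\mu_t=1\}$, $\{\delta_t=1,\mu_t=0\}$, into payloads built from $s_t$, $r_t$, and the mean-zero innovation-index fluctuation $\psi_t := \xi_{\tau_t}-\E[\xi_{\tau_t}]$ (since $\xi_t - M\xi_{\tau_t} = h_t - M\psi_t$). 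Propagating these payloads through the persistence structure of $\mathcal L$---the $(1-p)$-weighted freezing term coming from $\delta_t=0$, together with a geometric accumulation over runs of eavesdropper misses---is what assembles the weights $(1-p)$, $p\gamma$, and $(1-p)\gamma/(1-\gamma)$ on $\|s_t+r_t\|^2$, $\|s_t\|^2$, $\|r_t\|^2$, while the $\psi_t$ contribution supplies $\E[\|\xi_{\tau_t}-\E[\xi_{\tau_t}]\|^2]$.

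I expect the covariance lower bound to be the main obstacle. Unlike the mean part, it requires controlling the correlations between the adversary's running estimation error $q_t$ (through $\bar q_t$ and $\tilde q_t = q_t - \bar q_t$) and the accumulated model error, and it is here that the precise bookkeeping of the weights must be done carefully---in particular teasing out the factor $\gamma/(1-\gamma)$ from the geometric miss-runs and verifying that only a valid lower bound (not an equality) survives after discarding the cross-block and $q$-block contributions of $\tilde\Sigma_t$. A secondary point to keep honest is that the whole error model rests on the first-order Hessian linearizations underlying $A_{t,t'}$ and the proxy dynamics \eqref{eq:xihat_dynamics}, so the statement is a bound for the linearized dynamics, and the $\liminf$ is needed precisely because $\bar e_t$ and the per-step payloads $s_t,r_t$ need not individually converge.
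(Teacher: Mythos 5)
Your proposal is correct and follows essentially the same route as the paper's proof: the same split $\Sigma_t=\tilde\Sigma_t+\bar\sigma_t\bar\sigma_t^\intercal$, solving \eqref{eq:sigma_bar} to identify the mean contribution with $\|g_t\|^2$ (your identity $h_t=\xi_t-M\E[\xi_{\tau_t}]$, via the geometric law of $\tau_t$, is exactly how the paper's $h_t$ arises), and expanding $\E[v_tv_t^\intercal]$ over the events $(\delta_t,\mu_t)$ exactly as in \Cref{prop:Vt}---where, incidentally, your bookkeeping $\xi_t-M\xi_{\tau_t}=h_t-M(\xi_{\tau_t}-\E[\xi_{\tau_t}])$ is more careful than the paper's, whose stated fluctuation term omits the factor $M$. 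Your step of ``discarding the cross-block and $q$-block contributions'' is precisely the paper's inequality $\tilde\Sigma_{11,t+1}\succeq(1-p+p\gamma)\tilde\Sigma_{11,t}+V_t$, obtained via \Cref{prop:L} by dropping the positive-semidefinite term $p(1-\gamma)[I~~M]\tilde\Sigma_t[I~~M]^\intercal$, after which a plain geometric sum with ratio $1-p+p\gamma$ supplies the factor $\tfrac{1}{p(1-\gamma)}$ that converts $\tr(V_t)$ into the stated weights $(1-p)$, $p\gamma$, $(1-p)\tfrac{\gamma}{1-\gamma}$---no separate combinatorics over eavesdropper miss-runs is required.
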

\begin{proof}
    Recall that,
    \begin{align*}
        \|\ec_t\|^2 &= \tr([I ~~0] \Sigma_t [I~~0]^\intercal ) \\
        &= \tr([I ~~0] \tilde \Sigma_t [I~~0]^\intercal ) + \tr([I ~~0] \bar \sigma_t \bar\sigma_t^\intercal [I~~0]^\intercal ).
    \end{align*}
    Let us focus on the term $ \tilde\Sigma_{11,t} : =  [I ~~0] \tilde \Sigma_t [I~~0]^\intercal$. 
    To that end, we define $V_t \triangleq [I ~~0] \E[v_t v_t^\intercal] [I ~~0]^\intercal $ and pre and post multiply \eqref{eq:Sigma_tilde_Dynamics} with $[I~~0]$ and $[I~~0]^\intercal$, respectively, to obtain
    \begin{align*}
        \tilde\Sigma_{11, t+1} & = [I ~~0] \E[A_t \tilde\Sigma_t A_t^\intercal]  [I~~0]^\intercal + V_t \\
        & \overset{(\dagger)}{=} [I ~~0] \mathcal{L}(\tilde\Sigma_t)[I~~0]^\intercal + V_t \\
         = (1&-p + p\gamma) \tilde\Sigma_{11, t} + p(1-\gamma)[I ~~M]\tilde{\Sigma}_t [I~~M]^\intercal + V_t,
    \end{align*}
    where we used \Cref{prop:L} for $(\dagger)$ and the expression of $\mathcal{L}(\cdot)$ from \eqref{eq:operator:L} for the last equality. 
    Therefore, we obtain 
    \begin{align*}
        \tilde\Sigma_{11, t+1} \succeq (1&-p + p\gamma) \tilde\Sigma_{11, t} + V_t.
    \end{align*}
    and hence,
    \begin{align}
        \tr(\tilde\Sigma_{11, t+1}) &\ge (1-p+p\gamma)\tr(\tilde\Sigma_{11, t}) + \tr(V_t) \nonumber \\
        &\ge \sum_{k=0}^t(1-p+p\gamma)^{t-k} \tr(V_k),
    \end{align}
    and 
    \begin{align*}
        \liminf_{t\to \infty} tr(\tilde\Sigma_{11, t+1}) \ge \frac{1}{p(1-\gamma)} \liminf_{t\to\infty} \tr(V_t).
    \end{align*}
     Now, one may verify that\footnote{See \Cref{prop:Vt} in \Cref{AP:someUsefulResults}.} 
    \begin{align*}
        \tr(V_t) = p(1-p)(1-\gamma)\|s_t + r_t\|^2 + p^2\gamma(1-\gamma)\|s_t\|^2\\ + p(1-p)\gamma \|r_t\|^2 + p(1-\gamma) \E[\| \xi_{\tau_t}-\E[\xi_{\tau_t}]\|^2].
    \end{align*}
    To complete the proof, we verify from \eqref{eq:sigma_bar} that 
    \begin{align*}
        [I~~0]&\bar\sigma_t = (\xc_0- \xhat_0) + p\!\sum_{k=0}^{t-1}r_k \\
        &+ p(1-\gamma)(I-M_2)^{-1} \!\!\sum\nolimits_{k=0}^{t-1}(pI \! - \! M_1 M_2^{t-1-k})h_k ,
    \end{align*}
    where $M_1 = p(1-\gamma)M$ and $M_2 = (1-p)I + M_1$.
    Finally, this leads to 
    \begin{align*}
        \liminf_{t\to \infty} [I ~~0] \bar\sigma_t \bar\sigma_t^\intercal\begin{bmatrix}
            I\\0
        \end{bmatrix} = \liminf_{t\to\infty}\|g_t\|^2, 
    \end{align*}
     where we have used that, for any finite $k$, $\lim_{t\to\infty} M_2^{t-1-k}\!=\!0 $  due to \Cref{lem:M_condition} ensuring $\lambda_{\max}(M_2) < 1$.
    Thus, 
    \begin{align*}
        \liminf_{t\to\infty} \E\|\ec_t\|^2 \ge \liminf_{t\to\infty}\left( \frac{1}{p(1-\gamma)}\tr(V_t) + \|g_t\|^2 \right),
    \end{align*}
    which completes the proof.
\end{proof}
\Cref{thm:main} illustrates how the different aspects of the problem influences the protection. 
For instance, the roles of $p$ and $\gamma$ are quite obvious. 
More importantly, we also observe how $r_t = \zeta_t - \zetahat_t$ affects the protection. Similarly, the role of $\xi_t$ in the protection is also explicit. 
At a first glance, the protection lower bound in \Cref{thm:main} may not provide immediate insights on these roles. Therefore, we discuss a special case in details.
However, before that, we note that the quantity $\E[\| \xi_{\tau_t}-\E[\xi_{\tau_t}]\|^2]$ in \Cref{thm:main} can be simplified as
\begin{align*}
    \E[\| \xi_{\tau_t}-\E[\xi_{\tau_t}]\|^2] = & (1-p)^t\|\ell_t\|^2 \\
    &+ \sum\nolimits_{k=0}^{t-1}p(1-p)^{t-k-1}\|\xi_k - \ell_t\|^2,\\
    \ell_t := & \E[\xi_{\tau_t}] = \sum\nolimits_{s=0}^{t-1}p(1-p)^{t-s-1}  \xi_s,
\end{align*}
 which further shows the influence of $p$ and $\xi$ on the protection.

\subsubsection{Perfect Eavesdropping Case: $\mu_t =1$ almost surely}
\label{sub:mut_1}

We revisit the $\tilde\Sigma_t$ dynamics \eqref{eq:Sigma_tilde_Dynamics} and substitute $\mu_t =1$, which yields
\begin{align*}
    \tilde \Sigma_{t+1} = &(1-p)\tilde\Sigma_t + p \begin{bmatrix}
        I & 0\\
        0 & 0
    \end{bmatrix}\tilde \Sigma_t \begin{bmatrix}
        I & 0\\
        0 & 0
    \end{bmatrix} \\
    &+ \E[(\delta_t-p)^2] \begin{bmatrix}
        r_t \\
        -\E[q_t]
    \end{bmatrix}\begin{bmatrix}
        r_t \\
        -\E[q_t]
    \end{bmatrix}^\intercal.
\end{align*}
 Therefore, we have 
 \begin{align*}
     \tr(\tilde\Sigma_{11, t+1}) = \tr(\tilde\Sigma_{11,t}) + p(1-p)\|r_t\|^2 = p(1-p) \sum_{k=0}^t\|r_k\|^2. 
 \end{align*}
 Likewise, we also obtain that 
 \begin{align*}
     [I~~0]\bar\sigma_t = (\xc_0- \xhat_0) + p\sum_{k=0}^{t-1}r_k.
 \end{align*}
 In this case, the protection becomes 
 \begin{align}
     \liminf_{t\to\infty} \E\|\ec_t\|^2 = p(1-p)\liminf_{t\to\infty} \sum\nolimits_{k=0}^t \|r_k\|^2 \nonumber\\
     + \liminf_{t\to \infty} \|(\xc_0- \xhat_0) + p\sum\nolimits_{k=0}^{t-1}r_k\|^2. \label{eqn:prot_mu1}
 \end{align}
 In this case, only $\zeta_t$, i.e., the mismatch between the server and client models,  and $\xc_0-\xhat_0$ influence the protection. $\xi_t$ does not affect the protection, as one would expect. 
 It is noteworthy that setting $\xhat_0 = \xc_0$ is not necessarily optimal for the adversary. 
 Furthermore, the effect of $p$ is quite interesting. 
  By changing the sampling probability $p$, the server may improve the protection of the client. 
 In fact, \eqref{eqn:prot_mu1} is quadratic in $p$, which allows us to easily compute the optimal $p^*$ to maximize the r.h.s. of \eqref{eqn:prot_mu1}. 
 
We conclude this section by emphasizing a key insight:~the simple fact that the client shares model increments---rather than the full model---provides inherent protection, even when the adversary can eavesdrop on all communications.
In stark contrast, we next show the other extreme: FLOP-type algorithms have zero protection even when the adversary’s eavesdropping probability approaches zero.


\subsection{FL with Clients Sending Local Models (FLOP)}
In this section, we shift our focus to FL algorithms, such as MOCHA, FL via STC, SCAFFOLD, FSVRG, where clients upload their local models to the server.  
In this case, the eavesdropper intercepts the model $\xc_t$ with probability $\gamma$. 
We let the adversary follow the exact same dynamics as \eqref{eq:adv_dynamics}, except since the adversary intercepts $\xc_t$ instead of $\xi_t$, we change the definition of $\za_t$ as follows
\begin{align*}
    \za_t = (\mu_t(\xc_{t+1} - \xhat_t) + (1-\mu_t) \xihat_t) + \zetahat_t,
\end{align*}
where $\xc_{t+1} - \xhat_t$ acts a proxy to $\xi_t$. 
Plugging this expression of $\za_t$ in \eqref{eqn:xhat_def}, we obtain 
\begin{align*}
    \xhat_{t+1} = \mu_t\delta_t \xc_{t+1} + (1- \delta_t \mu_t) \xhat_t + \delta_t(1-\mu_t) \xihat_t + \delta_t\zetahat_t.
\end{align*}
Following our earlier discussion that $\zetahat_t = 0$, we notice that the event ${\delta_t\mu_t = 1}$ implies $\xhat_{t+1} = \xc_{t+1}$. 
Therefore, if there exists a time $t$ large enough when $\delta_t = \mu_t = 1$---which, in fact, occurs with probability 1 for any $p, \gamma > 0$---the adversary has the latest model that is very close to where the clients model will converge. 
This loss of privacy due to clients sending the true models instead of the model increments has been observed in other similar topics such as remote estimation \cite{tsiamis2019state}, consensus protocols \cite{arXivMaity2021}, and distributed control \cite{maity2024ensuring}. We emphasize that $\xhat_{t+1} = \xc_{t+1}$ occurs with probability $1$ for sufficiently large $t$, even when $\gamma$ is arbitrarily close to zero. 
In other words, FLOP-type algorithms are $0$\textit{-protected}, regardless of how weak the adversary is,  as long as $\gamma>0$.

\section{Experimental Results}
\label{sec:exp}

Here, we present experimental results demonstrating the protection-efficacy of FLIP algorithms, where the clients upload only the model updates $\xi_t$ to the server, against adversary's eavesdropping mechanism~\eqref{eqn:xhat_def},\eqref{eqn:xihat_def},\eqref{eq:xihat_dynamics} with eavesdropping success probability $\gamma = 0.5$. For simplicity, we choose $M = 0.5 I$. In general, the matrix-vector multiplication $M \xihat_{t'}$ in~\eqref{eq:xihat_dynamics} would require the adversary to do $\bigO(d^2)$ multiplications whenever $\mu_t = 0$, where $d$ denotes the dimensionality of the model. Instead, the choice $M = m I$ for some real-valued scalar $m > 0$ results in the adversary dynamics computation of $\bigO(d)$ for each $t$, reducing the adversary's computational budget, especially for large models. Following Remark~\ref{rem:zeta}, we also set $\zetahat_t = 0$.

We conduct two sets of experiments. Both experiments are for solving image classification problems by training the LeNet-5~\cite{lecun1998gradient} convolutional neural network (CNN) model with CIFAR-10 dataset~\cite{krizhevsky2009learning}. The benchmark CIFAR-10 dataset consists of $60k$ $32 \times 32$ color images in $10$ classes, with $6000$ images per class. There are $50k$ training images and $1k$ test images. LeNet-5 is one of the earliest CNN models that introduced key concepts such as convolution, pooling, and hierarchical feature extraction that underpin modern deep learning models.

In the first experiment, we are interested in evaluating the adversary's estimates $\xhat_t$. We ask the question: {\it how can the adversary judge the quality of its estimates $\xhat_t$?} Since the adversary does not have access to the client's actual models $\xc_t$, a meaningful way to judge $\xhat_t$'s quality would be evaluating it on representative samples from the same distribution as the client's data. Such samples can be obtained from the CIFAR-10 test set. There are situations where the adversary shares a common knowledge of the clients' data distribution or has access to a similar data pool, or in some cases, even the same underlying dataset~\cite{nasr2019comprehensive, lyu2020threats, fang2024byzantine}. So, we assume the client can access the CIFAR-10 test set. Note that, we remove this assumption in the second experiment later. We will evaluate and compare the adversary's $\xhat_t$ and the client's $\xc_t$ on the CIFAR-10 test set. Without loss of generality, we assume that the adversary is monitoring the uplink of the first client.

The FLIP training is conducted as follows. The CIFAR-10 training set is equally partitioned horizontally~\cite{fu2024differentially} among $N=8$ clients, so each client has $6.25k$ training samples. Since the performance of a federated learning algorithm depends on several factors, such as clients selection, batch-size of the clients, and number of local epochs $K$, we consider four scenarios. First, we describe the base setting as follows. The server chooses $n = 5$ clients uniformly at random at each global round. Since the adversary monitors the first client's uplink, we ensure that the server selects this client at least in the last five global rounds, so we have enough instances for demonstration. The number of global rounds is $T= 30$, and the number of local epochs is $K=3$. The batch-size of each client is $B = 128$. The learning rate is $\eta = 10^{-3}$. In the second setting, we consider that all the $8$ clients participate in each global round. In the third setting, local epochs are increased to $K = 10$. In the fourth setting, batch-size is reduced to $B = 8$. In each of these settings, the rest of the parameters of federated learning remain the same as in the first setting. The global model $x^s_0$ is initialized randomly and identically for all four settings.

In each of these four settings of the FLIP algorithm, we compute the adversary estimates with $\gamma = 0.5$. We consider two initialization of $\xhat_{-1}$: one is $\xhat_{-1} = 0$ and the other is $\xhat_{-1} = \xc_{t_0}$ where $t_0$ indicates the first global round $t = t_0$ where the first client is selected by the server. The second choice of initialization is to check the quality of adversary estimates if it is initialized close to the actual client trajectory. The accuracies of $\{\xhat_t\}$ and $\{\xc_t\}$ are evaluated on the CIFAR-10 test, shown in Fig.~\ref{fig:advclient}. Both client's model and adversary's estimate do not update over those global rounds $t$ when $\delta_t = 0$. We observe that the FLIP algorithm reaches steady-state in $T = 30$ rounds, and the first client's test accuracy is stable at values within $55\%-63\%$. However, the adversary's test accuracy is $\approx10\%$ for zero initialization and $20\%-30\%$ for $\xhat_{-1} = \xc_{t_0}$. At the steady-state, adversary's test accuracy is $\approx 12\%$ (ref. Fig.~\ref{fig:advclient_8}). When $\mu_t = 1$ almost surely, we observe the adversary's steady-state test accuracy is merely $\approx 13\%$, slightly better than a Bernoulli $\mu_t$ (ref. Fig.~\ref{fig:advclient_8}), indicating a non-zero protection as proved in~\eqref{eqn:prot_mu1}. On a side note, the client's test accuracy is the least $(55\%)$ for a larger number of local epochs (ref. Fig.~\ref{fig:advclient_10}) due to the client's model drift phenomena~\cite{fu2024differentially}.

\begin{figure}
\centering
\begin{subfigure}{.245\textwidth}
  \begin{center}
  \includegraphics[width = \textwidth]{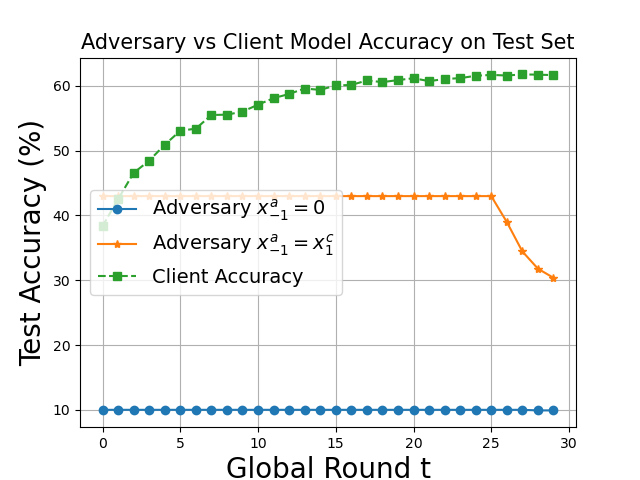}
  \caption{subset of clients participation}
  \label{fig:advclient_4}
  \end{center}
\end{subfigure}%
\begin{subfigure}{.245\textwidth}
  \begin{center}
  \includegraphics[width = \textwidth]{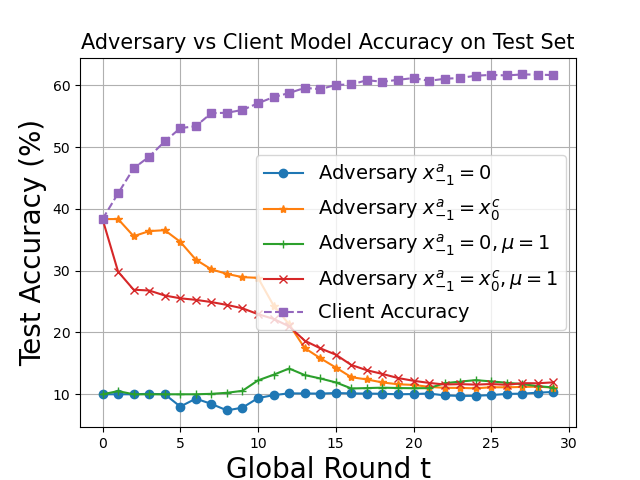}
  \caption{all clients participation}
  \label{fig:advclient_8}
  \end{center}
\end{subfigure}
\bigskip 
\begin{subfigure}{.245\textwidth}
  \begin{center}
  \includegraphics[width = \textwidth]{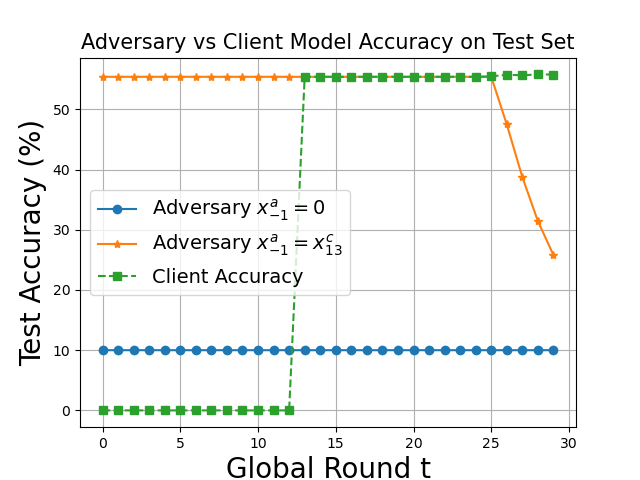}
  \caption{increased local epochs}
  \label{fig:advclient_10}
  \end{center}
\end{subfigure}%
\begin{subfigure}{.245\textwidth}
  \begin{center}
  \includegraphics[width = \textwidth]{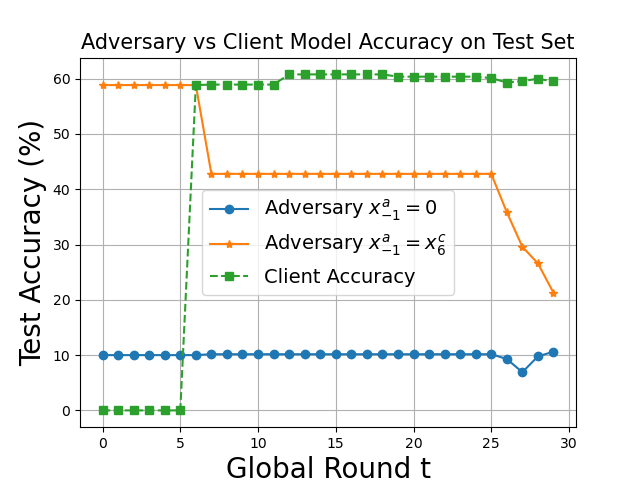}
  \caption{less batch-size}
  \label{fig:advclient_bs8}
  \end{center}
\end{subfigure}
\caption{\it \small CIFAR-10 test set accuracies, of (i) adversary's estimated model $\xhat_t$ with different initialization and (ii) client's true model $\xc_t$, when LeNet-5 is trained using FLIP. FL settings: (a) $n=5, K = 3, B = 128$, (b) $n=8, K = 3, B = 128$, (c) $n=5, K = 10, B = 128$, (d) $n=5, K = 3, B = 8$.}
\label{fig:advclient}
\end{figure}

In the second experiment, we compare FLIP's client model protection with DP-FL's client data privacy. Here, DP-FL refers to FedAvg equipped with a standard differential privacy mechanism, where the client uploads $\xc_t$ to the server (a FLOP algorithm). We remove the assumption that the adversary has access to any CIFAR-10 test samples. Since FLIP provides model protection and DP-FL provides training data protection, we require a common metric to compare these two mechanisms. We choose this metric as the adversary's estimated model accuracy on the reconstructed client's training data. An eavesdropping adversary can reconstruct a client's training data if, at any global round $t$, it has access to the accumulated gradients $\xi_t$ (or two consecutive client models $\xc_t$ and $\xc_{t-1}$), an estimate of learning rate $\eta$, and the actual or estimate of client's model $\xc_t$. Several model inversion algorithms allow the adversary to compute estimates of the client's training data~\cite{zhu2019deep, zhao2020idlg, geiping2020inverting, wilson2024federated}. Similar to the first experiment, we consider four different federated learning settings.

In each such setting, we begin with implementing a standard DP-FL algorithm (NbAFL)~\cite{wei2020federated} that incorporates differential privacy of client's data during training of federated learning. In implementing NbAFL, we set the privacy budget parameters $\delta = 0.01, \epsilon = 100$, clipping threshold of model weights $C=70$ (following the model observation process during the course of training mentioned in~\cite{wei2020federated}, and an upper limit on the number of rounds the first client participates is $10$. We observed a significant degradation of the client model's accuracy with a stricter privacy budget of less than $100$. That is why we set $\epsilon = 100$. We did not add noise in the server's global model update in NbAFL, since we assumed that the adversary only monitors the uplink.

If the clients upload its updated model $\xc_t$ as in DP-FL, instead of a model difference $\xi_t$ as in FLIP, the adversary has access to $\xc_t$ whenever its eavesdropping is successful, i.e., $\mu_t = 1$. Since the eavesdropping success probability is $\gamma > 0$, we will have $\mu_t = 1$ for some global round $t$ given a sufficient number of global rounds. Thus, for our implementation, we assume the adversary successfully eavesdrops on two consecutive global rounds $t=28$ and $t=29$. So, in the case of DP-FL, the adversary knows the client models $\xc_{28}$ and $\xc_{29}$. From this information, we let the adversary reconstruct the client's training data using the inversion algorithm from~\cite{zhu2019deep}. The model inversion algorithm is implemented for $100$ iterations, and the optimizer is chosen as L-BFGS. A sample of adversary's reconstructed images is shown in Fig.~\ref{fig:dp} for batch-size $B = 8$. On this reconstructed training data set, the adversary evaluates the intercepted true client model $\xc_{28}$. The corresponding accuracies are shown in Fig.~\ref{fig:dpis}. Since the adversary's estimate does not update over those global rounds $t$ when $\delta_t = 0$, we do not plot accuracies over those rounds.

\begin{figure}
\centering
\begin{subfigure}{.245\textwidth}
  \begin{center}
  \includegraphics[width = \textwidth]{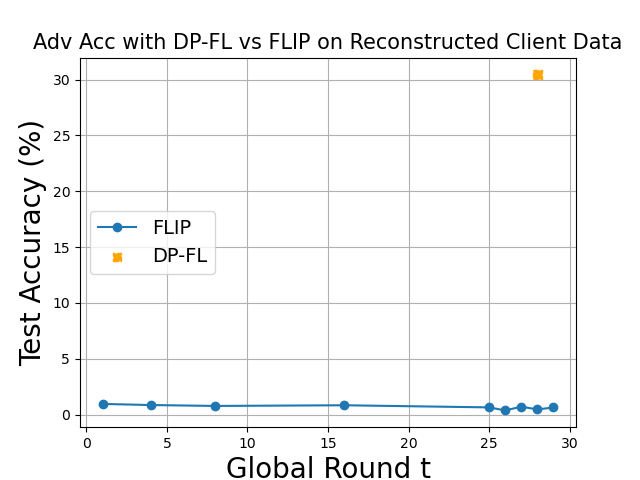}
  \caption{subset of clients participation}
  \label{fig:dpis_4}
  \end{center}
\end{subfigure}%
\begin{subfigure}{.245\textwidth}
  \begin{center}
  \includegraphics[width = \textwidth]{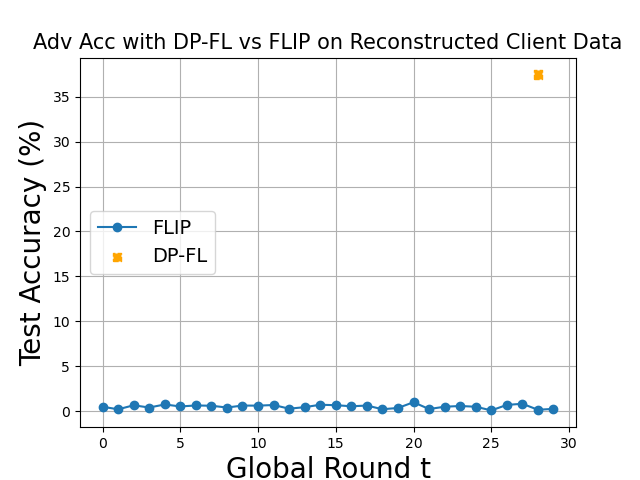}
  \caption{all clients participation}
  \label{fig:dpis_8}
  \end{center}
\end{subfigure}
\bigskip 
\begin{subfigure}{.245\textwidth}
  \begin{center}
  \includegraphics[width = \textwidth]{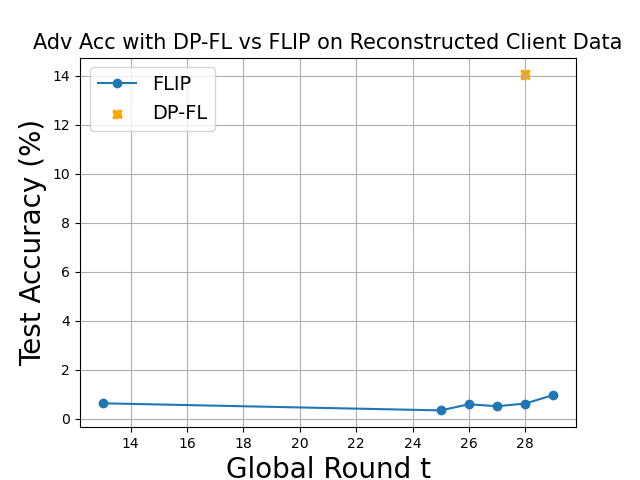}
  \caption{increased local epochs}
  \label{fig:dpis_10}
  \end{center}
\end{subfigure}%
\begin{subfigure}{.245\textwidth}
  \begin{center}
  \includegraphics[width = \textwidth]{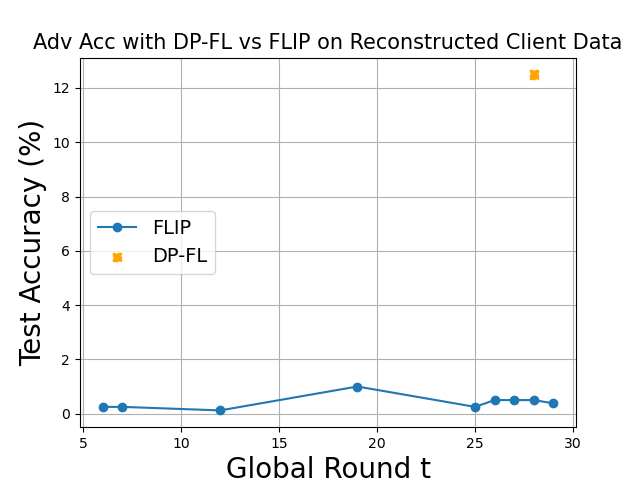}
  \caption{less batch-size}
  \label{fig:dpis_bs8}
  \end{center}
\end{subfigure}
\caption{\it \small Accuracies on reconstructed client's training images, of (i) adversary's estimated model $\xhat_t$ in case of FLIP and (ii) adversary's true intercepted model 
$\xc_t$ in case of DP-FL. FL settings are the same as Fig.~\ref{fig:advclient}.}
\label{fig:dpis}
\end{figure}

Again, in each of these four settings, we implement FLIP (without any differential privacy mechanism), as done in the first experiment. The adversary follows its estimation dynamics described in Section~\ref{sub:adv_model} to construct model estimates $\xhat_t$. From the intercepted $\xi_t$, which is the client's local gradient accumulation, adversary implements the same model inversion algorithm as above to reconstruct the client's data. A sample of adversary's reconstructed images is shown in Fig.~\ref{fig:is} for batch-size $B = 8$. Although not close to the original images, we observe that these reconstructed images in case of FLIP contain more information than DP-FL, such as some outlines are more prominent in Fig.~\ref{fig:dpis}, which is expected as differential privacy protects client data in case of DP-FL. Specifically, the \textit{mean squared error} between these two sets of images is $72.47$, indicating significant pixel-wise differences, and \textit{structural similarity index}~\cite{wang2004image} is $0.25$, meaning notable differences in structure and patterns ($0$ is completely different images). Finally, the adversary evaluates its model estimate $\xhat_t$ on this reconstructed training data set. The corresponding accuracies are shown in Fig.~\ref{fig:dpis}. We observe that the accuracy of the reconstructed data set, evaluated by the adversary with its estimated model, is $<1\%$ for FLIP. Even though the reconstructed images have better quality (less protection) for FLIP, the accuracy of reconstructed images, evaluated by the adversary with its intercepted client model, is $13\%-38\%$ for DP-FL. Moreover, regarding utility of FL, there is no loss in the client’s accuracy in federated learning with FLIP, whereas the added noising of the client models
in differential privacy mechanism reduces the client model’s accuracy on test set (ref. Fig.\ref{fig:dp_test}).

\begin{figure*}[htb!]
\centering
\begin{subfigure}{.45\textwidth}
  \begin{center}
  \includegraphics[width = 0.8\textwidth]{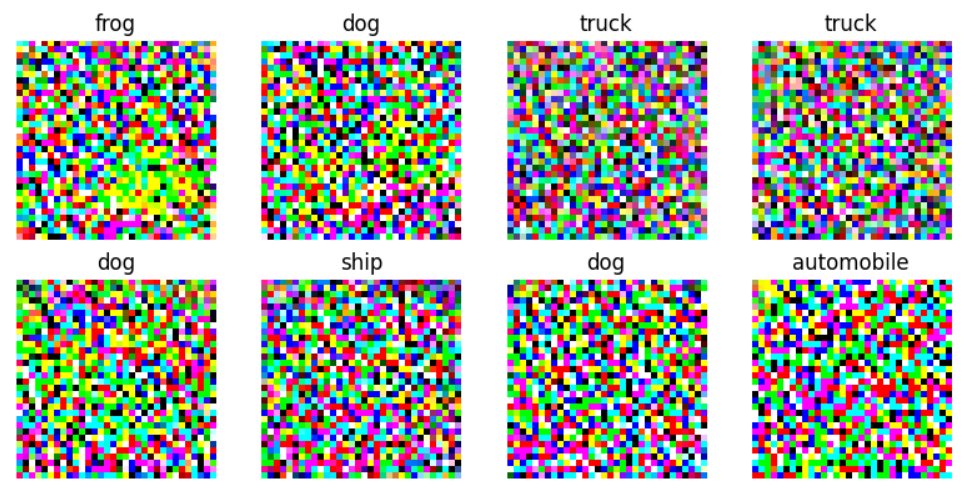}
  \caption{DP-FL}
  \label{fig:dp}
  \end{center}
\end{subfigure}%
\begin{subfigure}{.45\textwidth}
  \begin{center}
  \includegraphics[width = 0.8\textwidth]{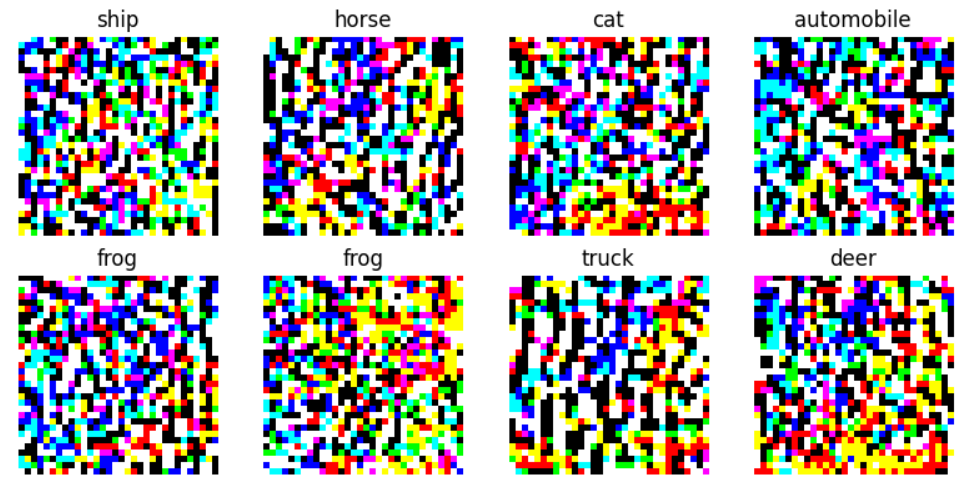}
  \caption{FLIP}
  \label{fig:is}
  \end{center}
\end{subfigure}%
\caption{\it \small Adversary's reconstructed training samples, in case of learning with (i) FLIP and (ii) DP-FL. FL settings is $n=5, K = 3, B = 8$.}
\label{fig:images}
\end{figure*}

\begin{figure}[h]
\centering
\includegraphics[width=0.7\linewidth]{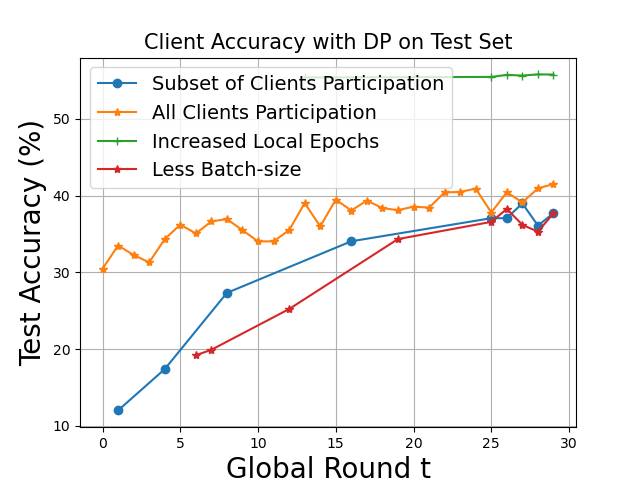}
\caption{\it \small CIFAR-10 test set accuracies of client's true model $\xc_t$, when LeNet-5 is trained using DP-FL. FL settings are the same as Fig.~\ref{fig:advclient}.}
\label{fig:dp_test}
\end{figure}

In summary, the first experiment shows that uploading the model updates $\xi_t$ (FLIP) to the server offers protection of the client's model against the eavesdropping adversary, as evidenced by the adversary's estimated model's evaluation on the CIFAR-10 test set. Here, the adversary's test accuracy of $10\%-12\%$ at steady-state indicates that the adversary's model performed as good as an untrained model, since random prediction on $10$ classes will also have $10\%$ accuracy on average. The second experiment considers accuracy of adversary's model evaluated on reconstructed client data as a metric. This experiment quantifies the adversary's estimated model on reconstructed client data in those scenarios where the adversary does not have direct access to actual samples from the data distribution. In the same settings, FLIP turns out to be much more effective in offering protection of the federated learning process than DP-FL without affecting the utility of federated learning.

\section{Conclusion}

In this work, we analyzed the protection of FL algorithms against eavesdropping adversaries.
Our results show that algorithms in which clients share model increments (e.g., SCAFFOLD) provide inherent protection, while those sending full local models do not.
We demonstrated how protection is influenced by client selection probability, adversary capability, and algorithm design.
Notably, algorithms using model increments retain protection even when the adversary intercepts client model updates at every round.
Simulations support our theoretical findings and highlight the advantages of this simple approach--i.e., sending model increments instead of local model--over differential privacy-based methods.

\bibliographystyle{ieeetr}
\bibliography{ref_imported, refs_imported, ref_fl}

\appendix

\subsection{Some Useful Results} \label{AP:someUsefulResults}

\begin{proposition} \label{prop:L}
   For any symmetric matrix $\Sigma$, let us define the linear operator $  \mathcal{L} (\Sigma) =\E[A_t \Sigma A_t^\intercal] $. Then, 
    \begin{align*}
       \mathcal{L} (\Sigma) = &  (1-p) \Sigma + p K_1\Sigma K_1^\intercal +p\gamma(1-\gamma) K_2 \Sigma K_2^\intercal. 
    \end{align*}
\end{proposition}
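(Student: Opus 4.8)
The plan is to evaluate $\mathcal{L}(\Sigma) = \E[A_t \Sigma A_t^\intercal]$ directly by enumerating the finitely many realizations of the random matrix $A_t$. Since $\delta_t,\mu_t \in \{0,1\}$ and the entries of $A_t$ depend only on the products $\delta_t(1-\mu_t)$ and $(1-\delta_t)I + \delta_t(1-\mu_t)M$, the matrix $A_t$ takes only three distinct values. When $\delta_t = 0$ (probability $1-p$, regardless of $\mu_t$) it equals the identity $I$; when $\delta_t=1,\mu_t=1$ (probability $p\gamma$) it equals $P := \bigl[\begin{smallmatrix} I & 0\\ 0 & 0\end{smallmatrix}\bigr]$; and when $\delta_t=1,\mu_t=0$ (probability $p(1-\gamma)$, using independence of $\delta_t,\mu_t$) it equals $Q := \bigl[\begin{smallmatrix} I & M\\ 0 & M\end{smallmatrix}\bigr]$. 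Consequently $\mathcal{L}(\Sigma) = (1-p)\Sigma + p\gamma\, P\Sigma P^\intercal + p(1-\gamma)\, Q\Sigma Q^\intercal$, and the remaining task is purely algebraic: to recast the last two terms into the $K_1,K_2$ form claimed.

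To perform this recasting I would condition on $\delta_t = 1$ and introduce the conditional-mean matrix $\bar X := \gamma P + (1-\gamma)Q$. A one-line computation shows $\bar X = K_1$, i.e.\ $K_1$ is exactly $\E[A_t \mid \delta_t = 1]$. The workhorse is the mean--fluctuation decomposition $\E[X \Sigma X^\intercal] = \bar X \Sigma \bar X^\intercal + \E[(X-\bar X)\Sigma(X-\bar X)^\intercal]$, valid because the cross terms carry the factor $\E[X-\bar X]=0$. Applied to $X = (A_t \mid \delta_t=1)$, this produces the $K_1\Sigma K_1^\intercal$ term immediately and reduces everything to the fluctuation second moment.

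For that fluctuation term I would compute the two deviations explicitly, obtaining $P - K_1 = -(1-\gamma)K_2$ and $Q - K_1 = \gamma K_2$, each a scalar multiple of the single matrix $K_2$. Hence the fluctuation second moment is $\gamma(1-\gamma)^2 K_2 \Sigma K_2^\intercal + (1-\gamma)\gamma^2 K_2 \Sigma K_2^\intercal = \gamma(1-\gamma)K_2\Sigma K_2^\intercal$, the scalars collapsing via $\gamma(1-\gamma)[(1-\gamma)+\gamma]=\gamma(1-\gamma)$. Reassembling the conditional result with the $\delta_t=0$ branch, weighted by $p$ and $1-p$ respectively, yields $(1-p)\Sigma + pK_1\Sigma K_1^\intercal + p\gamma(1-\gamma)K_2\Sigma K_2^\intercal$, as stated.

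The argument is essentially routine once the case enumeration is carried out; the only mild subtlety, and the step I would verify most carefully, is confirming that both fluctuation matrices $P-K_1$ and $Q-K_1$ are proportional to the \emph{same} matrix $K_2$. This alignment is precisely what collapses a two-term second moment into the single $K_2\Sigma K_2^\intercal$ contribution, and it is the reason the operator $\mathcal{L}$ has the clean three-term structure rather than four independent quadratic terms.
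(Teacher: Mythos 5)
Your proof is correct, and it takes a genuinely different (if closely related) route from the paper's. The paper decomposes $A_t = L_0 + \delta_t(1-\mu_t)L_1 + (1-\delta_t)L_2$ into fixed matrices with random scalar coefficients, expands $\E[A_t\Sigma A_t^\intercal]$ term by term using $\delta_t(1-\delta_t)=0$, $\E[\delta_t]=p$, $\E[\mu_t]=\gamma$, and independence, and then regroups the resulting five quadratic and cross terms into the three-term form via the observations $L_0+L_2=I$, $K_1=L_0+(1-\gamma)L_1$, $K_2=L_1$. You instead enumerate the three realizations of $A_t$ (namely $I$, $P$, $Q$ with probabilities $1-p$, $p\gamma$, $p(1-\gamma)$) and apply the mean--fluctuation identity $\E[X\Sigma X^\intercal]=\bar X\Sigma\bar X^\intercal+\E[(X-\bar X)\Sigma(X-\bar X)^\intercal]$ conditionally on the event $\{\delta_t=1\}$. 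All of your intermediate claims check out: $\gamma P+(1-\gamma)Q=K_1$, the deviations $P-K_1=-(1-\gamma)K_2$ and $Q-K_1=\gamma K_2$, and the scalar collapse $\gamma(1-\gamma)^2+(1-\gamma)\gamma^2=\gamma(1-\gamma)$. What your route buys is a structural explanation that the paper's regrouping leaves implicit: $K_1$ is exactly the conditional mean $\E[A_t\mid\delta_t=1]$, and the $p\gamma(1-\gamma)K_2\Sigma K_2^\intercal$ term is the conditional-variance contribution, which collapses to a \emph{single} quadratic form precisely because both fluctuation matrices are scalar multiples of the same $K_2$ --- the subtlety you correctly identify as the step worth verifying. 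What the paper's expand-and-regroup buys in exchange is mechanical robustness: it requires no insight into which realizations align and would extend directly to settings with more indicator-weighted terms, where a case enumeration would grow combinatorially.
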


\begin{proof}
    Let us write $A_t =  \begin{bmatrix}
        I & \delta_t(1-\mu_t)M\\
        0 & (1-\delta_t) + \delta_t(1-\mu_t)M 
    \end{bmatrix}$ as 
    \begin{align*}
        A_t = \underbrace{\begin{bmatrix}
            I & 0\\
            0 & 0
        \end{bmatrix}}_{L_0} + \delta_t (1-\mu_t)  \underbrace{\begin{bmatrix}
            0 & M\\
            0 & M
        \end{bmatrix}}_{L_1} + (1-\delta_t) \underbrace{\begin{bmatrix}
            0 & 0\\
            0 & I
        \end{bmatrix}}_{L_2} .
    \end{align*}
    This results in 
    \begin{align*}
        \E[A_t \Sigma A_t^\intercal] = & L_0 \Sigma L_0^\intercal + p(1-\gamma)L_1 \Sigma L_1^\intercal + (1-p) L_2 \Sigma L_2^\intercal \\
        & + p(1-\gamma) (L_0 \Sigma L_1^\intercal + L_1 \Sigma L_0^\intercal) \\
        &+ (1-p)  (L_0 \Sigma L_2^\intercal + L_2 \Sigma L_0^\intercal),
    \end{align*}
    where we have used $\delta_t(1-\delta_t) = 0$, $\E[\delta_t^2] = \E[\delta_t] = p$, $\E[\mu_t^2] = \E[\mu_t] = \gamma$, and the independence of $\delta_t$ and $\mu_t$.
    The last equation can be re-grouped as 
    \begin{align*}
         \E[A_t \Sigma A_t^\intercal] = & p(L_0 + (1-\gamma)L_1)\Sigma (L_0 + (1-\gamma)L_1)^\intercal \\
         & + (1-p) (L_0 + L_2)\Sigma (L_0 + L_2)^\intercal \\
         & + p\gamma(1-\gamma)L_1 \Sigma L_1^\intercal.
    \end{align*}
    Notice that $L_0 + L_2 = I$. By defining 
    \begin{subequations}\label{eq:K1_K2}
    \begin{align} 
     K_1 =& L_0 + (1-\gamma)L_1 =    \begin{bmatrix}
            I & (1-\gamma)M\\
            0 & (1-\gamma)M
        \end{bmatrix},\\
    K_2 =& L_1 =      \begin{bmatrix}
            0 & M\\
            0 & M
        \end{bmatrix},
    \end{align}
    \end{subequations}
we may write 

      \noindent  \hspace{6pt }$\E[A_t \Sigma A_t^\intercal] = (1-p) \Sigma + p K_1\Sigma K_i^\intercal + p\gamma(1-\gamma) K_2 \Sigma K_2^\intercal$.
\end{proof}

\begin{proposition} \label{prop:Vt}
    Suppose $v_t = \tilde A_t \bar \sigma_t + \tilde u_t$, as defined in \eqref{eq:sigma_tilde}. 
    Then, 
    \begin{align*}
        [I ~~0] \E[v_t v_t^\intercal] [I ~~0]^\intercal = p(1-p)(1-\gamma)(r_t+s_t)(r_t+s_t)^\intercal \\+ p^2\gamma(1-\gamma)s_t s_t^\intercal 
        +p(1-p)\gamma r_t r_t^\intercal \\ + p(1-\gamma)\E[(\E[\xi_{\tau_t}]- \xi_{\tau_t})(\E[\xi_{\tau_t}]- \xi_{\tau_t})^\intercal] ,
    \end{align*}
    where 
    \begin{align*}
        r_t &= \zeta_t - \zetahat_t,\\
        s_t &= \xi_t - M\E[\xi_{\tau_t}] + M \E[q_t]. 
    \end{align*}
\end{proposition}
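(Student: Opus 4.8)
The plan is to evaluate $[I~~0]\,\E[v_t v_t^\intercal]\,[I~~0]^\intercal = \E[v_{t,1}v_{t,1}^\intercal]$, where $v_{t,1} := [I~~0]\,v_t$ denotes the top block of $v_t = \tilde A_t\bar\sigma_t + \tilde u_t$. The strategy is to first reduce $v_{t,1}$ to a linear combination of the two \emph{deterministic} vectors $s_t$ and $r_t = \zeta_t - \zetahat_t$ plus a single mean-zero random vector, and then expand the resulting outer product term by term.

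First I would make the randomness in $A_t$ explicit. Using $\E[\delta_t]=p$ and $\E[\mu_t]=\gamma$ gives $\bar A = \E[A_t]$, and $\tilde A_t = A_t - \bar A$ has nonzero entries only in its second block column, governed by the two mean-zero scalars $a_t := \delta_t(1-\mu_t) - p(1-\gamma)$ and $b_t := p - \delta_t$. Hence $\tilde A_t\bar\sigma_t$ sees only the lower block of $\bar\sigma_t$, and its top block equals $a_t\,M\,\E[q_t]$. For $\tilde u_t$, the crucial point is that $\tau_t$ is measurable with respect to $\delta_0,\dots,\delta_{t-1}$ and is therefore independent of $(\delta_t,\mu_t)$; this lets the factor $\delta_t(1-\mu_t)$ decouple from $\xi_{\tau_t}$ when forming $\bar u_t = \E[u_t]$. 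Writing $\xi_t - M\xi_{\tau_t} = \big(\xi_t - M\E[\xi_{\tau_t}]\big) + M\big(\E[\xi_{\tau_t}] - \xi_{\tau_t}\big)$ and combining the top block of $\tilde A_t\bar\sigma_t$ with the top block of $\tilde u_t$, the deterministic contributions collapse onto $s_t = \xi_t - M\E[\xi_{\tau_t}] + M\E[q_t]$, giving the compact form
\begin{align*}
v_{t,1} = a_t\, s_t - b_t\, r_t + \delta_t(1-\mu_t)\,M\big(\E[\xi_{\tau_t}] - \xi_{\tau_t}\big).
\end{align*}

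With this in hand, expanding $\E[v_{t,1}v_{t,1}^\intercal]$ is mostly bookkeeping. Since $\E[\xi_{\tau_t}]-\xi_{\tau_t}$ is mean zero and independent of $(\delta_t,\mu_t)$, all cross terms pairing it with $s_t$ or $r_t$ vanish, and it contributes only its own quadratic term, with scalar weight $\E[(\delta_t(1-\mu_t))^2]=p(1-\gamma)$. The remaining three terms are weighted by the second moments $\E[a_t^2]=p(1-\gamma)\big(1-p(1-\gamma)\big)$, $\E[b_t^2]=p(1-p)$, and $\E[a_tb_t]=-p(1-\gamma)(1-p)$, all immediate from the Bernoulli statistics of $\delta_t,\mu_t$ and their independence. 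The final move is to regroup the coefficients of $s_ts_t^\intercal$, $r_tr_t^\intercal$, and $s_tr_t^\intercal+r_ts_t^\intercal$: recognizing that $p(1-p)(1-\gamma)$ is the shared weight that assembles the cross and part of the diagonal contributions into the single quadratic $(r_t+s_t)(r_t+s_t)^\intercal$, while the leftovers $p^2\gamma(1-\gamma)s_ts_t^\intercal$ and $p(1-p)\gamma\,r_tr_t^\intercal$ account for the remaining weight, yields the stated expression.

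I expect the main obstacle to be the careful tracking of the conditional-independence structure, namely justifying that $\xi_{\tau_t}$ is independent of $(\delta_t,\mu_t)$ so that the scalar second moments factor out and the innovation $\E[\xi_{\tau_t}]-\xi_{\tau_t}$ separates as its own variance term; the concluding regrouping of coefficients is then routine. I would also flag that, carried through faithfully, the last term emerges as $p(1-\gamma)\,M\,\E\big[(\E[\xi_{\tau_t}]-\xi_{\tau_t})(\E[\xi_{\tau_t}]-\xi_{\tau_t})^\intercal\big]M^\intercal$, i.e.\ with the proxy matrix $M$ acting on the innovation before the expectation is taken.
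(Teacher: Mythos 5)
Your proof is correct and follows essentially the same route as the paper's: the same reduction of $[I~~0]v_t$ to $\alpha_t s_t + (\delta_t-p)r_t$ plus a mean-zero innovation term, the same Bernoulli second moments ($\E[\alpha_t^2]=p(1-\gamma)(1-p(1-\gamma))$, $\E[(\delta_t-p)^2]=p(1-p)$, $\E[\alpha_t(\delta_t-p)]=p(1-p)(1-\gamma)$), vanishing cross terms via the independence of $\tau_t$ from $(\delta_t,\mu_t)$, and the same final regrouping into $(r_t+s_t)(r_t+s_t)^\intercal$ plus the two leftover quadratics. Moreover, the discrepancy you flag is real: writing $\xi_t - M\xi_{\tau_t} = (\xi_t - M\E[\xi_{\tau_t}]) + M(\E[\xi_{\tau_t}]-\xi_{\tau_t})$, the mean-zero piece is $\delta_t(1-\mu_t)M(\E[\xi_{\tau_t}]-\xi_{\tau_t})$, so the last term should be $p(1-\gamma)\,M\,\E[(\E[\xi_{\tau_t}]-\xi_{\tau_t})(\E[\xi_{\tau_t}]-\xi_{\tau_t})^\intercal]\,M^\intercal$; the paper's own ``verify that'' display drops this $M$, and the omission propagates to the term $\E[\|\xi_{\tau_t}-\E[\xi_{\tau_t}]\|^2]$ in \Cref{thm:main}, which is harmless only when $M$ preserves norms and certainly not for the scalar choice $M=mI$ with $m\neq 1$ used in the experiments.
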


\begin{proof}
    Verify that, 
    \begin{align*}
        [I~~0] v_t = \underbrace{(\delta_t(1-\mu_t) - p(1-\gamma))}_{:=\alpha_t} s_t + (\delta_t - p) r_t \\
        + \delta_t(1-\mu_t) (\E[\xi_{\tau_t}]- \xi_{\tau_t}).
    \end{align*}
    Therefore,
    \begin{align*}
        [I ~~0] \E[v_t& v_t^\intercal]  [I ~~0]^\intercal = \E[\alpha_t^2] s_t s_t^\intercal + \E[(\delta_t -p)^2]r_t r_t^\intercal \\
        &+ \E[\delta_t^2(1-\mu_t)^2]\E[(\E[\xi_{\tau_t}]- \xi_{\tau_t})(\E[\xi_{\tau_t}]- \xi_{\tau_t})^\intercal]\\
        & + \E[\alpha_t(\delta_t-p)](s_t r_t^\intercal + r_ts_t^\intercal), 
    \end{align*}
    where cross terms such as $\E[\alpha_t\delta_t(1-\mu_t)s_t(\E[\xi_{\tau_t}]- \xi_{\tau_t})]$ becomes zero since $\alpha_t,\delta_t$, and $\mu_t$ are independent of $\tau_t$, and hence, 
    $\E[\alpha_t\delta_t(1-\mu_t)s_t(\E[\xi_{\tau_t}]- \xi_{\tau_t})] = \E[\alpha_t\delta_t(1-\mu_t)]s_t\E[(\E[\xi_{\tau_t}]- \xi_{\tau_t})]= 0$.
    At this point, also verify that 
    \begin{align*}
        &\E[\alpha_t^2] = p(1-\gamma)(1-p + p\gamma),~~\\
        &\E[(\delta_t-p)^2] = p(1-p)\\
        &\E[\delta_t^2(1-\mu_t)^2] = p(1-\gamma),~~\\
        &\E[\alpha_t(\delta_t -p)]=p(1-p)(1-\gamma),
    \end{align*}
    and thus, after some rearrangement, 
    \begin{align*}
        V_t = p(1-p)(1-\gamma)(s_t+r_t)(s_t+r_t)^\intercal + p^2\gamma(1-\gamma)s_ts_t^\intercal \\
        +p(1-p)\gamma r_t r_t^\intercal + p(1-\gamma)\E[(\E[\xi_{\tau_t}]- \xi_{\tau_t})(\E[\xi_{\tau_t}]- \xi_{\tau_t})^\intercal]. 
    \end{align*}
    This completes the proof.
\end{proof}

\end{document}